\documentclass{article}
\usepackage{amsmath}
\usepackage{amssymb}
\usepackage{amsthm}
\usepackage{authblk} 
\usepackage{lineno}
\newtheorem{theorem}{Theorem}
\newtheorem{lemma}[theorem]{Lemma}


\title{Efficient Deterministic Single Round Document Exchange for Edit Distance}
\author{Djamal Belazzougui}
\affil{DTISI, CERIST Research center, 05 rue des trois freres Aissou, Benaknoun, Algiers, Algeria} 

\begin{document}
\maketitle
\begin{abstract}
Suppose that we have two parties that possess each 
a binary string. Suppose that the length of the first string 
(document) is $n$ and that the two strings (documents) 
have edit distance (minimal number 
of deletes, inserts and substitutions needed 
to transform one string into the other) at most 
$k$. The problem we want to solve is to devise 
an efficient protocol in which the first party sends a single message
that allows the second party to guess the first party's string. 
In this paper we show an efficient deterministic protocol for 
this problem. 
The protocol runs in time $O(n\cdot \mathtt{polylog}(n))$ and 
has message size $O(k^2+k\log^2n)$ bits. 
To the best of our knowledge, ours is the first 
\emph{efficient} deterministic protocol for this problem, 
if efficiency is measured in both the message 
size and the running time. As an immediate application of 
our new protocol, we show a new error correcting code that is efficient 
even for large numbers of (adversarial) edit errors. 
\end{abstract}
\section{Introduction}
Suppose that we have two parties that possess each 
a binary string. Suppose that the length of the first string 
(document) is $n$ and that the two strings (documents) 
have edit distance (minimal number 
of deletes, inserts and substitutions needed 
to transform one string into the other) at most 
$k$. The problem we want to solve is to devise 
an efficient protocol in which the first party sends a single message
that allows the second party to guess the first party's string. 
We call this problem the \emph{one-way document exchange 
under the edit distance}. 
In this paper, we answer an open question raised in~\cite{brakensiek2015efficient}  
by showing a deterministic solution to this problem with message 
size $O(k^2+k\log^2n)$ bits and encoding-decoding 
time $O(n\cdot \mathtt{polylog}(n))$~\footnote{
The decoding time is actually $O((n+m)\cdot \mathtt{polylog}(n+m))$, 
where $m$ is message size. However, we can safely assume 
that $m\leq n$, if the message size of the protocol exceeds $n$, 
then we can just send the original string.}\footnote{$f(n)=\mathtt{polylog}(n)$
if and only if $f(n)=\log^c(n)$ for some constant $c$.}. 
This result is to be compared to previous randomized schemes 
that achieve $O(k^2\log n)$~\cite{chakraborty2015low}, $O(k\log n\log (n/k))$~\cite{IMS05},  
and $O(k\log^2n\log^*n)$ bits~\cite{jowhari2012efficient}. 
We note that an optimal code should use $\Theta(k\log (n/k))$ bits~\cite{orlitsky1993interactive,brakensiek2015efficient}, 
and in fact this can be achieved with a protocol that runs in time exponential in $n$.
However, to the best of our knowledge no such deterministic code with polynomial time 
decoding and encoding is known for arbitrary values of $k$. 
We are not aware of any deterministic protocol with message size polynomial 
in $k\log n$ and encoding-decoding time polynomial in $n$
when $k>1$~\footnote{
For $k=1$, the Levenstein code~\cite{levenshtein1966binary,VT65} 
achieves optimal $\log n+O(1)$ bits.}. 
Our solution is based on a modification of the randomized one 
described in~\cite{IMS05}, in which we replace randomized string 
signatures (like Rabin-Karp hash function~\cite{karp1987efficient}) 
with deterministic ones~\cite{Vi90}. As an immediate application of 
our new protocol, we show a new error correcting code that is efficient 
for large numbers of (adversarial) edit errors. 
This improves on the code recently 
shown in~\cite{chakraborty2015low}, which works only for a very 
small number of errors. 

\section{Tools and Prelminaries}
In this section, we describe the main tools and techniques that 
will be used in our solution.  
\subsection{Strings, periods and deterministic samples}
Our main tools will be from string algorithmic literature. 
Recall that a string $p[1..m]$ is a sequence 
of $m$ characters from alphabet $\Sigma$. In this paper, 
we are mostly interested in $\Sigma=\{0,1\}$. We denote 
by $pq$ or $p\cdot q$ the string that consists in 
the concatenation of string $p$ with string $q$. 
We denote by $p[i..j]$, the substring of $p$ that 
spans positions $i$ to $j$. We denote by $|p|$ the length of string $p$. 
The edit-distance between two strings $p$ and $q$ is defined 
as the minimal number of edit operations necessary to transform 
$p$ into $q$, where the considered operations are character insertion, 
deletion, or substitution. 
We denote by $p^c$ the string that consists in the 
concatenation of $c$ copies of string $p$.

We now give some definitions about string periodicities. 
Recall that a string $p$ has period $\pi$
if and only if $p$ is prefix of $(p[1..\pi])^k$ for some integer 
constant $k>0$. 
An equivalent definition states that $\pi$ is a period 
of $p$ if $p[i]=p[i-\pi]$ for all $i\in[\pi+1,m]$. 
If $\pi$ is the shortest period of $p$, then 
$\pi$ is simply called \emph{the period} (we will usually mention when 
we talk about an arbitrary period that is not necessarily the shortest). 
We will make use of some easy simple properties of periods: 
\begin{enumerate}
\item Let $\pi$ a period of a string $p$. Then $\pi$ will also be a period 
of any substring of $p$ of length at least $\pi$. 
\item Given two strings 
$p$ and $q$ of same length and same period $\pi$, then $p=q$ if and only 
if $p[1..\pi]=q[1..\pi]$. 
\end{enumerate}
We will also use this lemma whose (trivial) proof is omitted. 
\begin{lemma}
\label{lemma:period_merge}
Suppose that a string $T$ has two substrings $T[i..j]$ 
and $T[i'..j']$ such that: 
\begin{enumerate}
\item $i'>i$ and $j'>j$ (none of the two substrings is included in the other). 
\item $\pi$ is a period of both strings. 
\item $j\geq i'+\pi$ (the overlap between the two substrings is at least $\pi$). 
\end{enumerate}
Then $\pi$ will also be period of substring $T[i..j']$. 
\end{lemma}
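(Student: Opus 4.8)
The plan is to work directly from the index-based definition of a period: $\pi$ is a period of a string $w[1..\ell]$ iff $w[t]=w[t-\pi]$ for every $t\in[\pi+1,\ell]$. Translated to a substring $T[a..b]$ of $T$, this says $T[t]=T[t-\pi]$ for every $t$ with $a+\pi\le t\le b$. So the goal is to prove $T[t]=T[t-\pi]$ for every $t$ with $i+\pi\le t\le j'$, given that this identity already holds for every $t\in[i+\pi,j]$ (from the first substring) and for every $t\in[i'+\pi,j']$ (from the second substring).

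First I would fix such a $t$ and split into cases according to where $t$ lies. If $t\le j$, the identity $T[t]=T[t-\pi]$ is immediate from the periodicity of $T[i..j]$, since $t\ge i+\pi$ by assumption. If $t\ge i'+\pi$, the identity follows from the periodicity of $T[i'..j']$, since $t\le j'$ by assumption. The only remaining case is $j<t<i'+\pi$, i.e. $t$ lies in the ``gap'' between the range covered directly by the first substring's period relation and the range covered by the second's. Here I would use the overlap hypothesis $j\ge i'+\pi$: it says precisely that $i'+\pi\le j$, so there is in fact \emph{no} integer $t$ with $j<t<i'+\pi$, and the gap case is vacuous. Hence the two ranges $[i+\pi,j]$ and $[i'+\pi,j']$ together cover all of $[i+\pi,j']$ (note $i'+\pi\le j$ also gives $i'\le j$, so $i'+\pi\le j<j'$ keeps us inside the interval), and $T[t]=T[t-\pi]$ holds throughout. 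Therefore $\pi$ is a period of $T[i..j']$, as claimed.

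The only thing worth double-checking is that $T[i..j']$ is a genuine substring, i.e. that $i\le j'$: this is immediate since $i<i'$ and $i'\le j'$ by hypothesis~1 together with $i'+\pi\le j'$ (which follows from $j'>j\ge i'+\pi$). I do not expect any real obstacle here; the content of the lemma is entirely the bookkeeping observation that hypothesis~3, $j\ge i'+\pi$, is exactly what is needed to make the two periodicity windows overlap with no gap, so that the elementary one-step relation $T[t]=T[t-\pi]$ can be chained across the whole interval. This is why the paper calls the proof trivial and omits it.
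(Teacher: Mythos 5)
Your proof is correct, and since the paper omits the argument as ``trivial,'' yours is exactly the intended one: express periodicity via the index relation $T[t]=T[t-\pi]$, observe that the hypothesis $j\ge i'+\pi$ makes the two index ranges $[i+\pi,j]$ and $[i'+\pi,j']$ overlap with no gap, and conclude the relation holds across all of $[i+\pi,j']$. No issues.
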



We will also make extensive use of the periodicity lemma due 
to Fine and Wilf~\cite{fine1965uniqueness}: 

\begin{lemma}
Suppose that a string $p$ of length $m$ has two periods 
$\pi_1$ and $\pi_2$ such that $p+q-\mathtt{gcd}(p,q)\leq m$, 
then it will also have period $\pi_3=\mathtt{gcd}(p,q)$. 
\end{lemma}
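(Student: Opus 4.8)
Write $g=\gcd(\pi_1,\pi_2)$; the statement to prove is that if $\pi_1+\pi_2-g\le m$ then $g$ is a period of $p$. My plan is the classical \emph{Euclidean\nobreakdash-descent} argument. First I would normalise: if $\pi_1=\pi_2$ there is nothing to do, since then $g=\pi_1$ is already a period, and exchanging the two periods lets me assume $\pi_1>\pi_2$. Note that the hypothesis already forces $\pi_1\le m$ (because $g\le\pi_2$) and even $2\pi_2\le m$ (because $g\mid(\pi_1-\pi_2)$ together with $\pi_1>\pi_2$ gives $\pi_1\ge\pi_2+g$), so every period in sight is genuine and there is enough room to chain two periodicity steps.

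The heart of the proof is the claim that $p$ also has period $r:=\pi_1-\pi_2$. Granting this, I would finish by strong induction on $\pi_1+\pi_2$: one has $\gcd(r,\pi_2)=g$ and $r+\pi_2-g=\pi_1-g\le\pi_1+\pi_2-g\le m$, so the pair $(r,\pi_2)$ satisfies the same hypothesis but with strictly smaller sum $r+\pi_2=\pi_1<\pi_1+\pi_2$. Iterating this reduction is exactly the subtractive Euclidean algorithm, so after finitely many steps I reach two equal periods, which must both be $g$, and the conclusion follows.

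To prove the claim I fix a position $i$ with $1\le i\le m-r$ and try to derive $p[i]=p[i+r]$ by composing a $\pi_1$\nobreakdash-step with a $\pi_2$\nobreakdash-step. If $i+\pi_1\le m$, then $p[i]=p[i+\pi_1]=p[i+\pi_1-\pi_2]=p[i+r]$, the middle equality being period $\pi_2$ applied at position $i+\pi_1>\pi_2$. If instead $i+\pi_1>m$ but $i>\pi_2$, I reverse the order: $p[i]=p[i-\pi_2]=p[i-\pi_2+\pi_1]=p[i+r]$, which is legitimate because $i\le m-r$ gives $i-\pi_2\le m-\pi_1$. The positions not covered by one of these two detours are exactly those with $i\le\pi_2$ \emph{and} $i+\pi_1>m$; using $m\ge\pi_1+\pi_2-g$ they satisfy $\pi_2-g<i\le\pi_2$, so there are at most $g$ of them, and they occur only in the regime $m<\pi_1+\pi_2$. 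Incidentally, the same bookkeeping shows that period $r$ is already established on the prefix $p[1..\pi_1-g]$ (from the first case) and on the suffix $p[\pi_2+1..m]$ (from the second), the two overlapping in a window of width $r-g$.

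The main obstacle is precisely these $O(g)$ boundary positions: for them neither two\nobreakdash-step detour stays inside $[1,m]$, the overlap of the two ``good'' ranges falls just short of $r$, so Lemma~\ref{lemma:period_merge} does not quite close the gap, and this is the only place where the \emph{full} strength of the hypothesis (rather than merely $m\ge\pi_1$) is used — consistent with the bound being tight. I expect to dispatch them by a short ad hoc argument that repeatedly peels off a $\pi_2$\nobreakdash-step and a $\pi_1$\nobreakdash-step for such an $i$, always finding a legal move because $g\le\pi_1-\pi_2$; this is really a miniature version of the same Euclidean descent, applied only to the handful of leftover positions, and carrying it out carefully is the one genuinely delicate point of the whole proof.
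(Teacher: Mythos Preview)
The paper does not prove this lemma at all; it simply cites Fine and Wilf~\cite{fine1965uniqueness}. So there is no paper proof to compare against, and the question is only whether your argument stands on its own.

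Your Euclidean-descent outline is the standard one and is correct up to the point you yourself flag: the boundary positions $\pi_2-g<i\le\pi_2$. The ``repeatedly peel off a $\pi_2$-step and a $\pi_1$-step'' sketch you offer for them is not wrong, but it is the wrong organisational choice: carrying it out carefully amounts to re-proving the lemma from scratch at each such $i$, and it draws no benefit from the induction you have already set up. As written, this is a genuine gap.

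The clean fix is to reorganise the induction so that these positions never appear. Do not try to establish period $r=\pi_1-\pi_2$ on all of $p$; establish it only on the prefix $q=p[1..m-\pi_2]$, which is exactly what your Case~1 already gives (positions $1\le i\le m-\pi_1$). That prefix also inherits period $\pi_2$ from $p$, has $\gcd(r,\pi_2)=g$, and has length $m-\pi_2\ge\pi_1-g=r+\pi_2-g$, so the induction hypothesis applies directly to $q$ and yields period $g$ on it. Since $m-\pi_2\ge\pi_2$ (because $\pi_1-\pi_2\ge g$), the block $p[1..\pi_2]$ lies inside $q$ and therefore has period $g$; combined with $g\mid\pi_2$ and period $\pi_2$ of $p$, this makes $p$ a prefix of a power of $p[1..g]$, hence of period $g$. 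This route uses only your Case~1, needs no Case~2 and no ad~hoc patching, and the delicate point simply evaporates.
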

However, we will apply it only when we have 
two periods of lengths at most $m/2$: 
\begin{lemma}
\label{lemma:period_lemma0}
Suppose that a string $p$ of length $m$ has two periods 
$\pi_1,\pi_2\leq m/2$, 
then it will also have period $\pi_3=\mathtt{gcd}(p,q)$. 
\end{lemma}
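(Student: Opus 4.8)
The plan is to obtain this as an immediate corollary of the Fine--Wilf periodicity lemma stated just above, so the only real work is to check that its hypothesis is automatically satisfied in this restricted regime.

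First I would observe that $\pi_1\leq m/2$ and $\pi_2\leq m/2$ give $\pi_1+\pi_2\leq m$, and since $\mathtt{gcd}(\pi_1,\pi_2)\geq 1$ we get
\[
\pi_1+\pi_2-\mathtt{gcd}(\pi_1,\pi_2)\;\leq\; m-1\;<\;m .
\]
Thus the condition $\pi_1+\pi_2-\mathtt{gcd}(\pi_1,\pi_2)\leq m$ required by Fine--Wilf holds, and applying that lemma to $p$ yields that $\pi_3=\mathtt{gcd}(\pi_1,\pi_2)$ is a period of $p$, which is exactly the assertion.

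I do not expect any genuine obstacle: the statement is simply a convenient weakening of Fine--Wilf, specialized to the case --- both periods small relative to the length --- that arises later in the paper, where it is cleaner to check a hypothesis of the form ``$\pi\le m/2$'' than to manipulate the sharper inequality. If one preferred a self-contained proof not relying on Fine--Wilf, one could instead argue by a Euclidean-style induction: replace the pair $(\pi_1,\pi_2)$ by $(\pi_2,\ \pi_1\bmod\pi_2)$, note that $\pi_1\bmod\pi_2$ is again a period of $p$ by an overlap/merging argument in the spirit of Lemma~\ref{lemma:period_merge}, observe that the ``each at most $m/2$'' property is preserved by the step (since $\pi_1\bmod\pi_2<\pi_2\le m/2$), and iterate until the smaller term reaches $\mathtt{gcd}(\pi_1,\pi_2)$. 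But invoking the already-stated lemma is the shortest route.
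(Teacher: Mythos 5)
Your proof is correct and matches the paper's (implicit) reasoning exactly: the paper states Lemma~\ref{lemma:period_lemma0} as an unproved specialization of the Fine--Wilf lemma, and the only thing to check is that $\pi_1,\pi_2\le m/2$ forces $\pi_1+\pi_2-\gcd(\pi_1,\pi_2)\le m$, which you do. Nothing further is needed.
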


The following lemma is easy to prove
using Lemma~\ref{lemma:period_lemma0}: 
\begin{lemma}
\label{lemma:period_lemma}
Given a string $p$ of length $m$ with period $\pi$, then for any $k\leq \mathtt{min}(\pi-1,m/3)$, 
we can always find a substring of $p$ of length $3k$ whose period is 
more than $k$. 
\end{lemma}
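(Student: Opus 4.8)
The plan is to argue by contradiction: assume that \emph{every} substring of $p$ of length $3k$ has shortest period at most $k$, and deduce that $p$ itself then has a period $\le k$, which is impossible since $\pi\le k$ would contradict $k\le\pi-1$. (If $m=3k$ this is immediate, as $p$ is itself a length-$3k$ substring with period $\pi>k$; so assume $m>3k$, which also guarantees that consecutive length-$3k$ windows exist.)

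The first step would be to show that any two consecutive windows $W_i=p[i..i+3k-1]$ and $W_{i+1}=p[i+1..i+3k]$ have the \emph{same} shortest period. Write $q_i,q_{i+1}\le k$ for these periods and consider the overlap $O=p[i+1..i+3k-1]$, of length $3k-1$. Being a substring of $W_i$ (resp.\ $W_{i+1}$) of length $3k-1\ge q_i$ (resp.\ $\ge q_{i+1}$), $O$ has both $q_i$ and $q_{i+1}$ as periods; since $q_i,q_{i+1}\le k\le (3k-1)/2$, Lemma~\ref{lemma:period_lemma0} applies and $O$ has period $d=\mathtt{gcd}(q_i,q_{i+1})$. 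If $d<q_i$ I would push period $d$ across the single extra letter $p[i]$ of $W_i$: from the period-$q_i$ relation $p[i]=p[i+q_i]$, and from the fact that $i+q_i$ and $i+d$ both lie in the index range of $O$ and are congruent mod $d$ (as $d\mid q_i$), one gets $p[i+q_i]=p[i+d]$, hence $p[i]=p[i+d]$, i.e.\ $W_i$ has period $d<q_i$, contradicting minimality of $q_i$. Thus $d=q_i$, and symmetrically $d=q_{i+1}$, so $q_i=q_{i+1}$.

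The second step is then easy. Let $q\le k$ be the common shortest period of all length-$3k$ windows. Now every window literally has period $q$, so Lemma~\ref{lemma:period_merge} glues them: applied to $p[1..3k]$ and $p[2..3k+1]$, whose overlap has length $3k-1$, its hypothesis $j\ge i'+\pi$ becomes $3k\ge 2+q$, which holds because $q\le k\le 3k-2$; hence $p[1..3k+1]$ has period $q$. Iterating this (a routine induction on $i$) shows $p[1..3k+i]$ has period $q$ for every $i\le m-3k$, so $p$ itself has period $q\le k$ — the desired contradiction with $k\le\pi-1$.

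The one delicate point — and the step I would double-check most carefully — is the "extend the period across one character" argument forcing $q_i=q_{i+1}$: one has to verify that $i+q_i$ and $i+d$ genuinely fall inside the range of $O$ and that $O$ is long enough ($3k-1\ge d$) for its period to chain those two positions, all of which reduces to $1\le d\le q_i\le k$ together with $k\ge1$. Everything else is bookkeeping with the hypotheses $m\ge 3k$ and $\pi\ge k+1$.
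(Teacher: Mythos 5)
Your proof is correct, and while it shares the paper's overall contradiction strategy (assume every length-$3k$ window has shortest period $\le k$, deduce that $p$ itself then has a period $\le k$), the execution differs in a way that is genuinely simpler. The paper fixes $\pi_1 := \mathrm{per}(p[1..3k])$ and advances the window in steps of size $\pi_1$, applying Fine--Wilf to each overlap to show the period is preserved; because these $\pi_1$-sized jumps may fall short of position $m$, the paper then needs a second, separate Fine--Wilf argument on the tail $p[m-3k+1..m]$ to finish. You instead slide the window by a single position at a time, prove that any two \emph{adjacent} windows $W_i,W_{i+1}$ share the same shortest period (via Fine--Wilf on the length-$(3k{-}1)$ overlap plus the neat ``push the period $d=\gcd(q_i,q_{i+1})$ across the one uncovered character'' trick to contradict minimality of $q_i$ if $d<q_i$), and then glue the windows left to right with Lemma~\ref{lemma:period_merge} in one uniform induction. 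This avoids the paper's two-phase structure entirely, and the arithmetic conditions you check ($q\le k\le(3k-1)/2$ for Fine--Wilf, $3k-2\ge q$ for the glue) are all valid. One could observe that you prove slightly more than you need (equality of \emph{shortest} periods of adjacent windows, rather than merely a common period), but that is not a flaw.
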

\begin{proof}
The proof is by contradiction. Suppose that every substring 
of length $3k$ has period at most $k$. 
Let the period of 
$p[1..3k]$ be $\pi_1\leq k$. Then suppose the period of $p[\pi_1+1..\pi_1+3k]$ 
is $\pi_2\leq k$. Then $\pi_2=\pi_1$ by basic periodicity lemma, since 
if it was not the case, then $\pi'=\mathtt{gcd}(\pi_1,\pi_2)\leq \pi_2/2$ will be period 
of $p[\pi_1+1..\pi_1+\pi_2]$ (and hence period of $p[\pi_1+1..\pi_1+3k]$) and it 
will also be period of $p[1..\pi_1]$ (and hence of $p[1..3k]$). Hence, we have a contradiction 
and $p[1..3k]$ and $p[\pi_1+1..\pi_1+3k]$ will have same period $\pi_1$. 
This implies that $\pi_1$ is period of $p[1..\pi_1+3k]$. 
We continue in the same way until we reach string $p[1..i\pi_1+3k]$ with $i=\lfloor (m-3k)/\pi_1\rfloor$.
We thus have that $p[j-\pi_1]=p[j]$ for all $j\in[\pi_1+1..,i\pi_1+3k]$. Before continuing, we notice 
that $m-3k\leq (i+1)\pi_1-1$ and thus $i\pi_1+3k\geq m-\pi_1+1\geq m-k+1$. We can now state that 
$p[j-\pi_1]=p[j]$ for all $j\in[\pi_1+1..,m-k+1]$. 

At this point, we let $\pi_0=\pi_1$ and consider the string $p'=p[m-3k+1,m]$. Assume this string has a period 
$\pi'\neq \pi$ with $\pi'\leq k$. Then the substring $p''=p[m-3k+1,i\pi_1+3k]$ has length at least $2k$, since $i\pi_1+3k\geq m-\pi_1\geq m-k$. 
Now by periodicity lemma $p''$ will have periods $\pi_0\leq k$ and $\pi'\leq k$ and thus will also have period 
$\pi''=\mathtt{gcd}(\pi_0,\pi')$. Now this implies that $\pi''$ is also shortest period of $p'$ since prefix 
$q'$ of $p'$ of length $\pi'$ is of period $\pi''$ and thus string $p'$ is prefix of $(p'[1..\pi'])^{c_1}=((q')^{c_2})^{c_1}=(q')^{c_2c_1}$ for 
some constants $c_1,c_2\geq 2$. Thus we have a contradiction with the fact that $\pi'$ is shortest period 
of $p'$. We thus have proved that $\pi_0$ is period of $p'$ and thus that $p'[j]=p'[j-\pi_0]$ for all $j\in[\pi_0+1,3k]$. 
This is equivalent to the fact that $p[j]=p[j-\pi_0]$ for all $j\in[m-3k+\pi_0+2,m]$
and thus all $j\in[m-2k+2,m]$. Since we already had $p[j]=p[j-\pi_0]$ for all $j\in[\pi_0+1..,m-k+1]$, 
we conclude that $p[j]=p[j-\pi_0]$ for all $j\in[\pi_0+1..,m]$ and thus $\pi_0\leq k$ is period of $p$,
a contradiction. 
\end{proof}
We will also use the following lemma by Vishkin~\cite{Vi90}
about deterministic string sampling. 
\begin{lemma}~\cite{Vi90}
\label{lemma:det_string_sampling}
Given a non-periodic pattern $p$ of length $m$ 
we can always find a set $S\subset [1..m]$ with $|S|\leq \log m-1$
and a constant $k<m/2$ 
such that given any text $T$, if $T[i..i+m-1]$ matches $p$
at all positions in $S$ then $T[j..j+m-1]\neq p$ for all $j\in [i-k..i-1]\cup[i+1..i-k+m/2]$.
Moreover, the set $S$ and the constant $k$ can be determined in time $O(m)$. 
\end{lemma}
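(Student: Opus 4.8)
The plan is to reduce the statement to a purely combinatorial ``covering by witnesses'' problem, to observe that every individual witness exists because $p$ is non-periodic, and then to push the size of $S$ down to $\log m-1$ by a recursive halving of the set of candidate offsets. \emph{Step 1 (reformulation).} Suppose the conclusion fails, so $T[i..i+m-1]$ agrees with $p$ at every position of $S$ and yet $T[j..j+m-1]=p$ for some $j=i+c$ with $c\in[-k,-1]\cup[1,m/2-k]$, in particular $0<|c|\le m/2$. Then for every $s\in S$ with $s-c\in[1,m]$ we have $p[s]=T[i+s-1]=T[j+(s-c)-1]=p[s-c]$. Call $s\in[1,m]$ a \emph{witness against offset $c$} if $s-c\in[1,m]$ and $p[s]\neq p[s-c]$; the displayed identities say the conclusion can fail for offset $c$ only when $S$ contains no witness against $c$. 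Hence it suffices to produce a set $S$ and a pivot $k<m/2$ with $|S|\le\log m-1$ such that $S$ contains a witness against \emph{every} offset $c\in[-k,-1]\cup[1,m/2-k]$; the pivot $k$ we are free to choose at the end of the construction.

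\emph{Step 2 (witnesses exist).} Since $p$ is non-periodic its shortest period is $m$, so no integer $\delta$ with $0<\delta<m$ is a period of $p$. If no $s$ witnessed a given offset $c>0$, i.e.\ $p[s]=p[s-c]$ for all $s\in[c+1,m]$, then $c$ would be a period of $p$, a contradiction; the case $c<0$ is symmetric. Applying the same computation to a pair of offsets rather than to an offset and $0$ shows that for any two offsets differing by $\delta\le m/2$ there is a position of $p$ (with both involved indices in $[1,m]$) at which the two shifted copies of $p$ disagree; this is the ``duel'' primitive, and the constraint $\delta\le m/2$ is where the restricted form of the periodicity lemma, Lemma~\ref{lemma:period_lemma0}, enters.

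\emph{Step 3 (economizing, and the obstacle).} A naive union of one witness per offset would use $\Theta(m)$ positions, so the point is to be efficient. Following Vishkin, view the $\lceil m/2\rceil$ candidate offsets as the leaves of a balanced binary tree and build $S$ level by level: maintain a set $Z$ of still-possible offsets (initially all of them) together with a distinguished pivot, repeatedly split $Z$ into two halves of (almost) equal size, spend a single duel position to discard one of the halves from $Z$, and add that position to $S$. After $\lceil\log\lceil m/2\rceil\rceil\le\log m-1$ steps $Z$ has collapsed to one offset, which we use to set $k$, and the accumulated positions witness against every offset ever discarded, i.e.\ against every offset of the window except the pivot. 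The main obstacle is exactly the inner claim here — that at each level one can find a \emph{single} position that kills an \emph{entire} half of the surviving offsets and that lies in $[1,m]$ — which fails for an arbitrary witness and is what forces the careful bookkeeping of the pivot together with the periodicity of $p$ (again via Lemma~\ref{lemma:period_lemma0}); this is the heart of Vishkin's argument and is where I would expect to spend essentially all the effort. Finally, the $O(m)$ running time follows because every duel position needed across the $O(\log m)$ levels can be extracted from one linear-time preprocessing of $p$ (for instance its failure function or $Z$-array), so the level-by-level construction only adds $O(\log m)$ further work.
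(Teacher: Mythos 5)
The paper does not prove this lemma at all: it is quoted directly from Vishkin~\cite{Vi90}, so there is no in-paper proof to compare against. Judged on its own terms, your sketch captures the right broad strokes (Step~1's reformulation via witnesses and Step~2's existence of witnesses from non-periodicity are both correct), but the step you yourself flag as the ``main obstacle'' is not merely left undone --- as phrased it is a different and strictly \emph{harder} statement than what Vishkin proves, and in general it is false. You propose to ``split $Z$ into two halves of (almost) equal size'' and then ``spend a single duel position to discard one of the halves''; that would require a single pattern position which is constant on each pre-chosen half and separates the two, which need not exist. The actual move is the reverse: pick \emph{any} pattern position at which the surviving shifted copies of $p$ do not all agree (such a position exists because any two shifts in $Z$ differ by fewer than $m/2$, so non-periodicity forces a disagreement in their overlap --- this is the \emph{only} place the periodicity argument enters, contrary to your suggestion that it is needed again at the halving step), partition $Z$ by the bit at that position, and recurse into the \emph{smaller} side. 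Both sides are nonempty, so the smaller has size $\le\lfloor|Z|/2\rfloor$, giving $\le\log\lceil m/2\rceil\le\log m-1$ positions. Relatedly, the pivot $k$ is \emph{not} ``free to choose at the end of the construction'': it is whichever of the $\lceil m/2\rceil$ initial shifts the recursion leaves standing, and the bound $k<m/2$ falls out because it is one of those shifts, not because you pick it. Fixing the halving step in this direction also makes the $O(m)$ time claim immediate, since each level needs only one linear scan of the current overlap to locate a non-constant column.
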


We can then immediately prove the following lemma: 
\begin{lemma}
\label{lemma:period_det_string_sampling}
Given a pattern $p$ of length $m$ 
and period $\pi\leq m/3$
we can always find a set $S\subset [1..\pi]$ with $|S|\leq \log\pi$
such that given any text $T$, if $T[i..i+m-1]$ has period 
(not necessarily shortest) $\pi$ and matches $p$ at all positions in $S$ then $T[j..j+m-1]\neq p$ 
for all $j\in [i+1..i+\pi-1]$. Moreover, the set $S$ can be determined 
in time $O(m)$. 
\end{lemma}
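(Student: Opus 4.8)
The plan is to reduce the statement to Vishkin's Lemma~\ref{lemma:det_string_sampling}, but applied \emph{not} to the period block $q^{*}:=p[1..\pi]$ itself — whose guaranteed "reach" turns out to be a factor of two too short — but to the longer prefix $Q:=p[1..2\pi-1]$, which (since $\pi\le m/3$) is a genuine prefix of $p$ and, as we will see, is the longest prefix of $p$ that is still aperiodic while having period exactly $\pi>|Q|/2$.

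First I would record the string-combinatorial facts. Since $\pi$ is the shortest period of $p$ and $m\ge 2\pi$, the block $q^{*}$ is primitive: if $q^{*}=u^{e}$ with $e\ge 2$, then $p$, being a prefix of $(q^{*})^{\infty}=u^{\infty}$, would have period $|u|<\pi$. Because $p$ has period $\pi$ we have $Q=p[1..2\pi-1]=q^{*}\cdot q^{*}[1..\pi-1]$, and $Q$ has no period $\pi'\le\pi-1$: otherwise the Fine--Wilf periodicity lemma (applicable since $\pi'+\pi-\gcd(\pi',\pi)\le 2\pi-2\le|Q|$) would give $Q$ the period $g:=\gcd(\pi',\pi)$, which divides $\pi$ and is $<\pi$, so that $q^{*}=Q[1..\pi]$ would be a proper power — a contradiction. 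Hence the shortest period of $Q$ is exactly $\pi$, which exceeds $|Q|/2$, so $Q$ is non-periodic in the sense required by Lemma~\ref{lemma:det_string_sampling}. I would then run that lemma on $Q$, obtaining in $O(|Q|)=O(m)$ time a set $S''\subseteq[1..2\pi-1]$ with $|S''|\le\log(2\pi-1)-1<\log\pi$ and a constant $k''$, and \emph{fold} $S''$ into the first period by setting $S:=\{\,((s-1)\bmod\pi)+1:s\in S''\,\}\subseteq[1..\pi]$, so that $|S|\le|S''|\le\log\pi$. Folding is harmless because $p$ has period $\pi$: $Q[s]=p[s]=p[((s-1)\bmod\pi)+1]$ for all $s\le 2\pi-1$, and whenever $T[i..i+m-1]$ has period $\pi$ one has $T[i+s-1]=T[i+((s-1)\bmod\pi)]$; consequently "$T$ matches $p$ on $S$ at $i$" is equivalent to "$T$ matches $Q$ on $S''$ at $i$", and (all indices staying inside $[i..i+3\pi-2]\subseteq[i..i+m-1]$) likewise at the shifted anchor $i+\pi$.

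For correctness, suppose $T[i..i+m-1]$ has period $\pi$, $T$ matches $p$ on $S$, and, for contradiction, $p$ occurs at $j=i+d$ with $d\in[1..\pi-1]$. By the previous paragraph $T$ matches $Q$ on $S''$ both at $i$ and at $i+\pi$, while the occurrence of $p$ at $i+d$ yields $T[i+d..i+d+2\pi-2]=p[1..2\pi-1]=Q$. Applying Lemma~\ref{lemma:det_string_sampling} at anchor $i$ forbids a full $Q$-occurrence at any position in $[i+1..\,i+\pi-1-k'']$ (its right reach), which rules out all $d\le\pi-1-k''$; applying it at anchor $i+\pi$ forbids a full $Q$-occurrence at any position in $[i+\pi-k''..\,i+\pi-1]$ (its left reach), which rules out all $d\in[\pi-k''..\pi-1]$. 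Since $(\pi-1-k'')+1=\pi-k''$, the two ranges together cover all of $[1..\pi-1]$, contradicting the assumed occurrence. (For $\pi=1$ the target range is empty and the claim is trivial with $S=\varnothing$.)

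The step I expect to be the main obstacle — and the reason a single application to $q^{*}$ fails — is a factor-two "reach" mismatch: Lemma~\ref{lemma:det_string_sampling} for a length-$\ell$ pattern only excludes occurrences in a window of total width $\ell/2$ about the anchor, whereas here we must exclude occurrences across a window of width $\pi-1$ lying entirely on one side of $i$, and the periodicity of $T$ only gives us anchors spaced $\pi$ apart, so naively a gap of width $\approx\pi/2$ remains. Passing to the length-$(2\pi-1)$ pattern $Q$ is what restores the reach to $\approx\pi-1$, and the periodicity of $T$ must then be used two more times — once to fold the sample back into $[1..\pi]$ and once to supply the second anchor $i+\pi$, whose left reach absorbs whatever split $k''$ of the window the lemma hands us. Getting these three uses of the period $\pi$ to line up within the stated size bound $\log\pi$ is the delicate part; the purely combinatorial ingredients (primitivity of $q^{*}$, aperiodicity of $Q$) are routine.
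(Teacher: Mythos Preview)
Your proposal is correct and follows essentially the same route as the paper: apply Lemma~\ref{lemma:det_string_sampling} to the aperiodic prefix $p[1..2\pi-1]$, use the period-$\pi$ assumption on $T$ to obtain a second anchor at $i+\pi$, combine the right reach from $i$ with the left reach from $i+\pi$ to cover all shifts $[1..\pi-1]$, and fold the sample set back into $[1..\pi]$. Your bookkeeping (the explicit covering $[1..\pi-1-k'']\cup[\pi-k''..\pi-1]$ and the size bound $|S''|\le\log(2\pi-1)-1<\log\pi$) is in fact tighter than the paper's own write-up.
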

\begin{proof}
The lemma can be proved by using Lemma~\ref{lemma:det_string_sampling}
twice. Let $p'=p[1..2\pi-1]$. It is easy to see that $p'$ is non-periodic.
If it was, then it would have another period $\pi'<\pi$ and 
$p$ would have period $\mathrm{gcd}(\pi,\pi')<\pi$, a contradiction. 
Now, by Lemma~\ref{lemma:det_string_sampling} applied on $p'$, we can find a set 
$S\subset [1..2\pi-1]$ with $|S|\leq \log\pi+1$ and a constant $k<\pi$ 
such that given any text $T$, if $T[i..i+2\pi-1]$ matches $p'$ at positions 
$S$, then $T[j..j+2\pi-1]\neq p'$ for all $j\in   [i-k..i-1]\cup [i+1..i-k+\pi-1]$. 
Now assume that $T[i..i+m]$ is periodic with period $\pi\leq m/3$ 
then the fact that $T[i..2\pi-2]$ matches $p'$ at positions $S$, implies that 
$T[j..j+2\pi-1]\neq p'$ for all $j\in [i+1..i-k+\pi-1]$. 
Since $T[i..3\pi-1]$ is periodic too, then $T[i..2\pi-2]=T[i+\pi..3\pi-2]$
and $T[i+\pi..3\pi-2]$ matches $p'$ at positions $S$ and applying the 
lemma again we get that $T[j..j+2\pi-1]\neq p'$ for all $j\in[i+\pi-k..i+\pi-1]$. 
Thus we have have that  $T[j..j+2\pi-1]\neq p'$ for all $j\in[i+1..i+\pi-1]$. 
Also, since $p'$ and $T[i..2\pi-1]$ have both period $\pi$, comparing any 
positions in $S$ reduces to comparing characters at positions in a set 
$S'\subset [1..\pi]$ with $|S'|\leq |S|$. This finishes the proof of the lemma. 

\end{proof}

\subsection{Supporting algorithms}
In this subsection, we describe some supporting algorithms. 
These are only useful for efficient implementation 
(time) of our main algorithms. Our main results can be understood without 
the need to understand these algorithms. They are usually not the most efficient 
ones, but we tried to find the simplest algorithms that run in optimal time,  
up to logarithmic factors. All lemmas are folklore or easily follow 
from known results. 

\begin{lemma}
\label{lemma:runs}
Given a string $T$ of length $n$ and fixed length $m$ we can compute the periods
of all periodic substrings of $T$ of length $m$ in time $O(n\log n)$. 
\end{lemma}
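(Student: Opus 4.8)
The plan is to reduce Lemma~\ref{lemma:runs} to the computation of the \emph{runs} (maximal repetitions) of $T$, followed by a simple sweepline. Throughout I call a substring of length $m$ \emph{periodic} when its shortest period is at most $m/2$ (the argument is insensitive to the precise constant fraction; replacing $m/2$ by $m/3$, as is convenient elsewhere in the paper, only changes a threshold test). Recall that a run of $T$ is a maximal substring $T[a..b]$ whose shortest period $\pi$ satisfies $b-a+1\ge 2\pi$, where ``maximal'' means that neither $T[a-1..b]$ nor $T[a..b+1]$ still has period $\pi$. By the classical results on runs (Main and Lorentz; the linear bound on the number of runs is due to Kolpakov and Kucherov), $T$ has $O(n)$ runs, and the complete list of runs together with their shortest periods can be produced in $O(n\log n)$ time, and in fact in $O(n)$ time since $\Sigma=\{0,1\}$ is of constant size. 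So the first step is to build this list of triples $(a_R,b_R,\pi_R)$.

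The second step is the structural claim linking runs to window periods: for every window $W_i=T[i..i+m-1]$, \emph{$W_i$ is periodic if and only if it is contained in some run $R$ with $2\pi_R\le m$, and in that case the shortest period of $W_i$ equals the minimum of $\pi_R$ over all such runs $R$ with $[i,i+m-1]\subseteq[a_R,b_R]$.} One direction is immediate from the first period property recalled above (a period of a string is a period of every one of its substrings of length at least that period): if $[i,i+m-1]\subseteq[a_R,b_R]$ then $\pi_R$ is a period of $W_i$, so the shortest period of $W_i$ is at most $\pi_R\le m/2$. For the converse, let $\pi^{*}$ be the shortest period of $W_i$, with $2\pi^{*}\le m$; extending $W_i$ to the left and to the right as far as the period $\pi^{*}$ is preserved yields a substring $R=T[a_R..b_R]$ with $W_i\subseteq R$ and $b_R-a_R+1\ge m\ge 2\pi^{*}$, and $R$ is a run whose shortest period is exactly $\pi^{*}$ — it cannot be smaller, since a shorter period of $R$ would, again by the first period property, be a period of $W_i$ and contradict the minimality of $\pi^{*}$. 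Finally, every run containing $W_i$ contributes a period $\pi_R\ge\pi^{*}$ of $W_i$, while the run just constructed contributes $\pi^{*}$ itself, so the stated minimum is $\pi^{*}$.

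The third step makes this algorithmic. A run $R$ with $2\pi_R\le m$ and $b_R-a_R+1\ge m$ ``imposes'' the period $\pi_R$ on exactly the window starts $i\in[a_R,\,b_R-m+1]$, while a run violating either condition imposes nothing, and a window start covered by no such interval corresponds to a non-periodic window. By the second step it thus suffices, for each $i\in[1,n-m+1]$, to output the minimum weight among those $O(n)$ weighted intervals that cover $i$. This is the standard interval-stabbing-minimum problem: sort the $O(n)$ interval endpoints, sweep from left to right, maintain the multiset of currently active weights in a balanced search tree (equivalently, a min-heap with lazy deletions), and read off the current minimum at each position. Each of the $O(n)$ events costs $O(\log n)$, so this step takes $O(n\log n)$, and together with the $O(n\log n)$ enumeration of runs the whole procedure runs in $O(n\log n)$.

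The step I expect to require the most care is the structural equivalence of the second step: one must check that the maximal period-preserving extension of a window is genuinely a run in the sense used by the enumeration algorithm (period equal to the shortest period, and length at least twice that period), that its period is $\pi^{*}$ and not a proper divisor of $\pi^{*}$, and that with the chosen threshold ``periodic'' and ``contained in a run with $2\pi_R\le m$'' really do coincide. The remaining ingredients — the bound on the number of runs, the time to enumerate them, and the sweepline — are folklore.
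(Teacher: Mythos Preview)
Your proposal is correct and follows essentially the same approach as the paper: enumerate all runs of $T$ (Kolpakov--Kucherov), observe that the shortest period of a length-$m$ window equals the minimum period among the runs containing it, and recover these minima by a left-to-right sweep maintaining the active runs in a balanced search tree. If anything, your argument is slightly more careful than the paper's, since you explicitly justify the structural equivalence between ``periodic window'' and ``window contained in a run with $2\pi_R\le m$'' and prove that the minimum is attained, whereas the paper states this as an observation.
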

\begin{proof}
We use the algorithm~\cite{kolpakov1999finding} to compute all the runs 
of string $T$. This will allow to compute the run of any periodic 
substring. In~\cite{kolpakov1999finding} it is proved that the number 
or runs in a string of length $n$ is $O(n)$ and moreover 
the set of all runs can be computed in $O(n)$ time. Given a text $T$, 
a run is a substring $s=T[i..j]$ so that $s$ is periodic with period 
$\pi\leq (j-i+1)/2$ and $\pi$ is not a period of $T[i-1..j]$ and $T[i..j+1]$. 
In other words, runs are the maximally long periodic substrings 
of $T$ and any periodic substring or $T$ of period $\pi$ will be substring 
of a run with the same period. Given a length $m$, we can use the runs to determine the periods 
of substrings of length $m$ of $T$ in time $O(n\log n)$. We first make 
the observation that the period of substring of $T$ is the shortest 
among the periods of all runs that include the substring. 
We can thus show the following algorithm. 
We put the runs into two lists, a list $L_s$ sorted by increasing starting positions 
and another list $L_e$ sorted by increasing ending positions. We then for $i$ increasing 
from $1$ to $n-m+1$ do the following: 
\begin{enumerate}
\item Check if the next run in the list $L_s$ has starting position 
$i$ and if so insert into the binary search tree and advance the list pointer. 
\item Scan the list $L_e$ until reaching a run that ends in position less than $i$ and remove all the encountered 
runs from the binary search tree and update the list pointer to the successor of the last 
removed run. 
\item Finally, set the period of string $T[i..i+m-1]$ to be the smallest of among 
the periods of runs currently stored in the binary search tree. 
\end{enumerate}
The correctness of the algorithm stems from the fact that at any 
step $i$ the binary search tree will contain exactly the runs that 
span substring $T[i..i+m-1]$. 
Concerning the running time, it is clear that every operation takes at most $O(\log n)$ time
and thus running time of all steps is $O(n\log n)$. In addition, the slowest 
operation in the preprocessing is the sorting of the lists $L_s$ and $L_e$ 
which takes $O(n\log n)$. Thus, the whole algorithm runs in time $O(n\log n)$. 
This finishes the proof of the lemma.
\end{proof}

\begin{lemma}~\cite{knuth1977fast}
\label{lemma:KMP}
We can determine the period of a string of length 
$n$ in time $O(n)$. 
\end{lemma}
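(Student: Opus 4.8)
The plan is to compute the Knuth--Morris--Pratt failure function of $p$ and read off the period from its last value. For $1\le i\le n$, let $b[i]$ denote the length of the longest \emph{border} of the prefix $p[1..i]$, where a border of a string is a proper prefix that is also a suffix (so $b[1]=0$). First I would record the elementary correspondence between borders and periods that is immediate from the equivalent definition of a period stated above: $p[1..n]$ has period $\pi$ exactly when $p[\pi+1..n]=p[1..n-\pi]$, i.e. exactly when $p[1..n-\pi]$ is a border of $p$. Hence the shortest period of $p$ is $n-b[n]$, and the whole task reduces to computing $b[n]$ (indeed the entire array $b[1..n]$, which is what the algorithm produces).

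Next I would describe the standard $O(n)$ computation of $b[1..n]$. The key structural fact, again a direct consequence of the definitions, is twofold: (i) every border of $p[1..i]$ of positive length is obtained from a border of $p[1..i-1]$ by appending $p[i]$, and it exists iff that character extends such a border; and (ii) the border lengths of a string, listed in decreasing order, are exactly those visited by iterating the failure links, i.e. for $s=p[1..i-1]$ they are $b[i-1], b[b[i-1]], \dots$, terminating at $0$. Using this, one processes $i=2,\dots,n$ in order, maintaining a candidate length $\ell$ initialized to $b[i-1]$; while $\ell>0$ and $p[\ell+1]\ne p[i]$ one replaces $\ell$ by $b[\ell]$; finally $b[i]=\ell+1$ if $p[\ell+1]=p[i]$ and $b[i]=0$ otherwise. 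Correctness follows by induction on $i$ from (i) and (ii): the chain of failure links enumerates precisely the border lengths of $p[1..i-1]$ in decreasing order, so the first of them that can be extended by $p[i]$ yields the longest border of $p[1..i]$.

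The only part that is not pure bookkeeping is the running time, since a naive bound on the inner while-loop gives $O(n^2)$. I would invoke the usual amortized (potential) argument: track $\ell$ across the whole run; it increases by at most one per outer iteration (hence by at most $n$ in total, as $\Phi_2=b[1]=0$), it is never negative, and each inner-loop step strictly decreases it because $b[\ell]<\ell$ for $\ell\ge 1$. Therefore the total number of inner-loop steps over the entire execution is at most $n$, the algorithm runs in $O(n)$ time, and returning $n-b[n]$ gives the period of $p$. The border/period dictionary together with this potential argument are the two conceptual ingredients; the main (and only mild) obstacle is the amortized time bound, everything else being routine.
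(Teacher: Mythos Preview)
Your argument is correct: computing the KMP failure function in $O(n)$ via the standard amortized analysis and returning $n-b[n]$ is exactly the classical way to determine the shortest period in linear time. The paper does not actually prove this lemma; it merely states it with a citation to Knuth--Morris--Pratt, treating it as a known black box. So your write-up supplies the standard proof that the paper omits, and there is nothing to compare.
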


\begin{lemma}
\label{lemma:substring_period_query}
Given a string $T$ of length $n$, we can build a data structure 
of size $O(n)$, so that we can check whether the period of any substring 
$T[i..j]$ has period $\pi$ (not necessarily shortest one) in time $O(1)$. 
\end{lemma}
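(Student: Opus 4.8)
The plan is to reduce a period query on $T[i..j]$ to a single longest-common-extension query inside $T$, and then to implement such queries in $O(1)$ time using an $O(n)$-space index, so that the whole data structure is folklore from standard string-indexing components.

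First I would use the characterization of periods directly: $\pi$ is a period (not necessarily the shortest) of $T[i..j]$ if and only if $T[a]=T[a-\pi]$ for every $a\in[i+\pi,j]$, which is the same as the equality of the two length-$(j-\pi-i+1)$ substrings $T[i..j-\pi]$ and $T[i+\pi..j]$. Writing $\mathrm{lce}(a,b)$ for the length of the longest common prefix of the suffixes $T[a..n]$ and $T[b..n]$, this equality is equivalent to $\mathrm{lce}(i,i+\pi)\ge j-\pi-i+1$ (and it holds vacuously when $\pi\ge j-i+1$). So a period query becomes: return true if $\pi\ge j-i+1$, and otherwise return the truth value of $\mathrm{lce}(i,i+\pi)\ge j-\pi-i+1$.

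Next I would build, in $O(n)$ time and space, the usual machinery for constant-time $\mathrm{lce}$: the suffix array of $T$ together with its inverse (rank) array, the $\mathrm{LCP}$ array, and a linear-space range-minimum-query structure over the $\mathrm{LCP}$ array answering queries in $O(1)$ time. Then $\mathrm{lce}(a,b)$ is evaluated by looking up the ranks of the two suffixes and taking the minimum of the $\mathrm{LCP}$ array over the interval of ranks strictly between them; this is one range-minimum query, hence $O(1)$. When one suffix is a prefix of the other the query returns $n-\max(a,b)+1$, which only helps, since we merely compare the returned value against $j-\pi-i+1$.

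There is no genuine obstacle here; the only care needed is in citing the off-the-shelf components and in handling the boundary cases cleanly — namely when $\pi$ is as large as the length of $T[i..j]$ (the period condition is then vacuous), and when the two suffixes used in the $\mathrm{lce}$ query extend to the end of $T$ (harmless, as the returned value remains a valid lower bound for the comparison we perform). I would also remark that any $O(n)$-space, $O(1)$-time data structure for $\mathrm{lce}$ queries may be substituted for the suffix-array-based one without changing the statement.
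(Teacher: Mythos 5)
Your proposal is correct and takes essentially the same approach as the paper: both reduce the period check on $T[i..j]$ to a single constant-time longest-common-extension query supported by an $O(n)$-space index (the paper uses a suffix tree with LCA support, you use a suffix array with LCP and RMQ, which are interchangeable). In fact you spell out the reduction $\pi$ is a period of $T[i..j]$ iff $\mathrm{lce}(i,i+\pi)\ge j-\pi-i+1$ explicitly, which the paper's proof leaves implicit (its text breaks off after "Then checking whether").
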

\begin{proof}
We build suffix tree on $T$~\cite{mccreight1976space} with support for Lowest common ancestor 
queries~\cite{bender2000lca}. This will allow to answer longest common prefix queries 
between substrings of $T$. Then checking whether 
\end{proof}
\subsection{Error correcting codes}
We will make use of the systematic error correcting codes. 
Given a length $n$ and a parameter $k<n/2$, one would wish to have an algorithm 
that takes any string $s$ of length $n$ bits and encodes 
it into a string $s'$ of length $f(n,k)$ such that 
one can recover $s$ from $s'$ even if up to $k$ positions 
of $s$ are corrupted. Reed-Solomon codes~\cite{reed1960polynomial} 
are a family of codes in which $f(n)=n+\Theta(k\log n)$. A code is said to be 
systematic if $s'$ can be written as concatenation 
of $s$ with a string $r$ of $\Theta(k\log n)$ bits. The string $r$ is 
called the redundancy of the code. 
In fact a Reed-Solomon code 
can be used to correct a string of length $n$ over alphabet 
$[1..\Theta(n)]$ against $k$ errors using the same redundancy $\Theta(k\log n)$ 
bits. 

\begin{lemma}~\cite{gao2003new}
\label{lemma:ECC}
There exists a systematic Reed-Solomon code for strings 
of length $n$ over alphabet $[1..\Theta(n)]$ which can be 
encoded and decoded in time $\Theta(n\cdot\mathtt{polylog}(n))$. 
\end{lemma}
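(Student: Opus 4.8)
The plan is to realize the code via the classical evaluation/interpolation description of Reed--Solomon codes, using fast polynomial arithmetic for the encoder and Gao's Euclidean-algorithm decoder~\cite{gao2003new} for the decoder.

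First I would fix the ground field. Since $k<n/2$ we have $n+2k<2n$, so one can pick a finite field $\mathbb{F}_q$ of size $q=\Theta(n)$ with $q$ at least $n+2k$ and at least the alphabet size $\Theta(n)$ --- for instance a prime in the interval $[n+2k,\,2(n+2k)]$, or the least power of two exceeding $n+2k$. Once a defining prime (or irreducible polynomial) is found, each arithmetic operation in $\mathbb{F}_q$ costs $O(\mathtt{polylog}(n))$ bit operations, and the alphabet $[1..\Theta(n)]$ embeds injectively into $\mathbb{F}_q$. I would then fix $n+2k$ distinct points $\alpha_1,\dots,\alpha_{n+2k}\in\mathbb{F}_q$.

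For encoding, given a message $m=(m_1,\dots,m_n)$ over the alphabet (viewed inside $\mathbb{F}_q$), I would compute by fast interpolation the unique polynomial $f$ of degree less than $n$ with $f(\alpha_i)=m_i$ for every $i\le n$, and then compute the $2k$ check symbols $c_j=f(\alpha_{n+j})$, $1\le j\le 2k$, by fast multipoint evaluation. The output codeword is $s'=(m_1,\dots,m_n,c_1,\dots,c_{2k})$. It is \emph{systematic}, since it consists of $m$ followed by $2k\log q=\Theta(k\log n)$ bits of redundancy, and it is exactly the vector of evaluations of the degree-$(<n)$ polynomial $f$ at the $n+2k$ distinct points $\alpha_i$; hence it belongs to a Reed--Solomon code of minimum distance $(n+2k)-(n-1)=2k+1$, which corrects any $k$ errors.

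For decoding a received word $r=(r_1,\dots,r_{n+2k})$ within Hamming distance $k$ of the codeword of some message $m$, I would run Gao's algorithm~\cite{gao2003new}: interpolate the polynomial $g_1$ of degree less than $n+2k$ through the points $(\alpha_i,r_i)$; build $g_0=\prod_{i=1}^{n+2k}(x-\alpha_i)$ with a product tree; run the extended Euclidean algorithm on the pair $(g_0,g_1)$, halting at the first remainder of degree less than $n+k$; the true message polynomial is then obtained as the exact quotient of that remainder by the corresponding B\'ezout coefficient, and re-evaluating it at $\alpha_1,\dots,\alpha_n$ returns $m$. Correctness here is precisely Gao's theorem. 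The running-time analysis is where the real work lies: one must check that every ingredient --- polynomial multiplication, multipoint evaluation, interpolation, the product/remainder trees, and in particular the truncated extended Euclidean step implemented by the divide-and-conquer ``half-GCD'' recurrence --- runs in $O(n\log^2 n)$ field operations using standard FFT-based fast polynomial arithmetic, giving $O(n\cdot\mathtt{polylog}(n))$ bit operations overall; the one-time cost of choosing the field and the evaluation points is dominated by this bound. I expect this last accounting, and especially invoking the asymptotically fast (rather than quadratic) Euclidean algorithm inside Gao's decoder, to be the main obstacle; the rest is routine bookkeeping.
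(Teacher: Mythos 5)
The paper offers no proof for this lemma; it is stated as a black-box citation of Gao's fast Reed--Solomon decoder~\cite{gao2003new}. Your sketch is a faithful reconstruction of what that reference actually does: evaluation-based systematic encoding via fast interpolation and multipoint evaluation, a field of size $\Theta(n)$ (available because $k<n/2$ forces $n+2k<2n$), a minimum distance of $(n+2k)-n+1=2k+1$, and Gao's Euclidean-style decoder sped up to $O(n\log^2 n)$ field operations by replacing the naive remainder sequence with the divide-and-conquer half-GCD routine. You correctly flag the half-GCD step as the only place where one must be careful not to silently fall back to the quadratic Euclidean algorithm, and the per-operation cost of $O(\mathtt{polylog}(n))$ bit operations in $\mathbb{F}_q$ closes the accounting to $O(n\cdot\mathtt{polylog}(n))$. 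In short, your argument is correct and is essentially the content of the cited work rather than a competing route; since the paper delegates the entire proof to the citation, there is nothing in the paper's text for it to diverge from.
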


We notice that a systematic Reed-Solomon code 
can be used to implement an efficient document exchange under 
the hamming distance. Given a string $s$ of length $n$, simply 
compute the systematic error correcting code on $s$
and send the redundancy $r$. The receiver can then concatenate 
his own string with $r$ and use the decoding algorithm. 
Clearly if the receiver's string differs
from $s$ in at most $k$ positions, then the decoding 
algorithm will be able to recover the string $s$.

\section{Document Exchange for Edit Distance}

Our scheme is based on the one devised by 
Irmak, Mihaylov and Suel~\cite{IMS05} (henceforth denoted IMS). The scheme 
is randomized (Monte-Carlo). Our contribution is to show 
how to make the scheme deterministic at the cost of a slight 
increase in message size. 
\subsection{IMS Randomized protocol}
The IMS scheme works as follows: given a string $T_A$ of length 
$n$ without loss of generality assume that $n=2^bk$ 
for some integer $b$ and that $k$ is a power of two. 
Divide the string into $2k$ pieces of equal lengths 
and send the hash  
signatures on each piece. 
Then divide the string into $4k$ pieces 
and send the signatures compute the hash 
signatures of all pieces (each string has hash signature 
of length $c\log n$ for some constant $c$), compute systematic error 
encoding code (Reed-Solomon for example) 
on them with redundancy $2k$ and send 
the redundancy. We do the same at all levels until we reach 
$\Theta(n/\log n)$ pieces of length $c\log n$ bits, 
for some constant $c$ in which case, we send redundancy 
of length $\Theta(k\log n)$ bits on the piece's content. At the end, 
we will get $\log (n/k)$ levels, at each level
sending $\Theta(k\log n)$ bits for total of $\Theta(k\log(n/k)\log n)$ bits. 
The receiver holding a string $T_B$ of length $n$ with edit distance 
at most $k$ from string $T_A$ can recover $T_A$ solely from the message
as follows. At root level he tries to match every signature $i$ with 
substrings of $T_B$ of length $B$ starting at positions $[iB+1-k,iB+1+k]$, 
where $B=n/(2k)$ at each step comparing the hash signatures. 
If any hash signature matches we conclude that strings are equal and we copy the block.
The main idea is that we can match all but $k$ pieces~\footnote{This comes 
from a basic property of edit distance which states that if string 
$T_B$ is at edit distance $k$ from string $T_A$, then at all but $k$
blocks of $T_B$ can be found in $T_A$, and moreover their positions 
in $T_B$ is shifted from their position in $T_B$, by at most $k$
positions.}. 
For each piece that matches
a substring of $T_B$, we can divide the substring into two pieces and compute the hash 
signatures on it. For pieces that are not matched we divide them into two pieces 
and associate random signatures with them. Thus at next step, we can build $4k$ signatures and be 
assured that at most $2k$ hash signatures could be wrong. We then correct the $2k$ wrong hash signatures 
using the redundancy and continue. At next step for every $i\in[1..4k]$, try to match 
signature number $i$ with substrings of $T_B$ of length $B$ that start at positions $[iB+1-k,iB+1+k]$, 
this time with $B=n/(4k)$. We then induce (up to) $8k$ signatures from the matching substrings of $T_B$ 
(if a signature of a block of $T_A$ does not match any string of $T_B$, we put $2$ arbitrary signatures), 
and be assured that at most $2k$ signatures are wrong. We continue in the same way, at each step deducing the hash signatures 
at consecutive levels, until we reach the bottom level, at which we copy the content of 
each matching blocks from $T_B$ instead of writing two signatures. We then can deduce the 
content of $T_A$ after correcting the $n/B$ copied blocks each of length $B=c\log n$ bits
(at most $2k$ blocks are wrong). 
The whole algorithm works with high probability, for sufficiently large constant $c$, since 
a substring of $T_B$ will match a signature of a substring of $T_A$ if they are equal 
and will not match with high probability if they differ. 
\subsection{Our Deterministic protocol}

We will show how to use the deterministic signatures of 
Lemma~\ref{lemma:det_string_sampling} to 
make the IMS protocol deterministic. Before giving formal 
details, we first give an overview of the modification. 
First, recall that at each level, we need to find for 
each block of $S_A$, a matching substring from $S_B$. 
This matching string will have to be in a window 
of size $2k+1$. In the randomized scheme, the
signatures will allow to ensure that with high probability a 
substring will match a block only if 
it is equal. Our crucial observation is that we are allowed
to return an arbitrary false positive match if \emph{no}
substring in the window matches the block, since the error-correcting
code will allow to recover the information. However, in case of a 
match, we will have to return only the matching substring
(no false negatives or false positives are allowed). 
By using deterministic samples and exploiting properties 
of string periodicities we will be able to eliminate 
all but one matching substring as long as the compared strings 
are long enough (the length has to be at least $ck$ 
for some suitable constant $c$). Due to this constraint 
our scheme will not work at bottom levels. We thus stop 
using it at the first level with blocks of size $ck$ 
bits and instead store the redundancy to allow 
to recover these blocks, incurring $\Theta(k^2)$ more 
bits of redundancy. 
We now give more details on our scheme. 
\paragraph{Encoding}
We reuse the same scheme as above (the IMS scheme) but this time using deterministic signatures
and stopping at level with pieces of length $\mathtt{max}(32k,2^{\lceil\log\log n\rceil})$ bits. 
At each level (except the bottom), the signature of a piece $p=T_A[iB+1,iB+B]$
will consist in the following information: 
\begin{enumerate}
\item Let $\pi$ be the period of $p$. We will store the starting 
position $s$ and length $\ell$ of some substring $p'$ of $p$. 
If $\pi\leq 4k+2$ then we set $p'=p$, $s=0$ and $\ell=B$. 
Otherwise, Let $p''$ be a substring of $p$ of length $12k+6$ and period $\pi''$ longer than $4k+2$ (this 
is always possible by Lemma~\ref{lemma:period_lemma}, since the string 
$p$ has period more than $4k+2$). If $p''$ is non-periodic, then 
we let $p'=p''$, otherwise set $p'=p''[1..2\pi-1]$. Notice that 
$|p'|\geq 8k+4$ and $p'$ is always non-periodic. 
The starting position and length need $O(\log n)$ bits. 
\item A deterministic sample of $p'$
which stores $\Theta(\log k)$ positions of $p'$
and the value of the characters at those positions. 
Each position is stored using $\Theta(\log k)$ bits, since 
in case $p'=p''$, we have $|p''|\leq 12k+6$ and in 
case $p'=p$, all samples are from first $\pi\leq 4k+2$
positions. We also store the period $\pi'$ of $p'$. In total 
we store $O(\log n+\log^2k)$ bits. 
\end{enumerate}
It is clear that the total information stored for each piece will be 
of length $O(\log n+\log^2k)$ bits. 
At the bottom, level, the 
string $T_A$ is divided into pieces of length $n/B$ 
with $B=\mathtt{max}(32k,2^{\lceil\log\log n\rceil})$ and the stored redundancy will be $\Theta(2k(B+\log n))=\Theta(k(k+\log n))$
bits allowing to recover $2k$ pieces each of length $\mathtt{max}(32k,2^{\lceil\log\log n\rceil})$ bits. At the other levels the 
redundancy will allow to recover up to $2k$ wrong piece 
signatures, necessitating redundancy $O(k(\log^2k+\log n))$ bits. 
It remains to describe more precisely how the redundancy 
is generated. At each level, we will have a sequence of $n/B$ signatures 
of length $r_B$. Since a Reed-Solomon code for a string of length $n$ deals only with alphabet 
size up to $n$, we will divide each signature into $d=\Theta(r_B/\log (n/B))$ blocks each of length $\Theta(\log (n/B))$
bits, build $d$ sequences where a sequence $i\in[1..d]$ consists in the concatenation 
of block $i$ from all successive signatures. It is clear that the redundancy is as stated 
above and that the encoding will allow to recover from up to $2k$ wrong signatures. 
\paragraph{Decoding}
The recovery will be done now at each level by matching
every piece's signature against $2k+1$ consecutive substrings 
of $T_B$. The main idea is that at most one substring 
could match (multiple substrings could match only if they 
are equal). More in detail for matching signature of string 
$T_A[iB+1,iB+B]$ against substrings starting at positions 
$j\in [iB-k,iB+k]$ in $T_B$ we match the substring $q'=T_B[j+s,j+s+\ell-1]$ 
against the signature of string $p'$. We first determine whether  
$\pi'$ is period (not necessarily shortest) of $q'$ and if so, compare the signatures 
of $q'$ and $p'$ (comparing the substrings at sampled positions). 
We then can keep only one position as follows: 
\begin{enumerate}
\item If $\ell<B$, we know that $\pi>4k+2$, and by Lemma~\ref{lemma:det_string_sampling}, 
we can eliminate all but one candidate position $j$. 
To see why, notice that $|p'|\geq 8k+4$ and thus by ~\ref{lemma:period_det_string_sampling}
we can eliminate $t$ candidates to the left and $t'=4k+2-t-1$ candidates to the right
for some $t\geq 0$. 
Notice that either $t$ or $t'$ has to be at least $2k$. It is clear, then that 
we can not have two candidates $j$ and $j'$ within distance $2k$ without 
one of the two eliminating the other. 
\item If $\ell=B$, we have $\pi\leq 4k+2$, and the checking will work correctly
since we have $3\pi=12k+6\leq 32k$
and so Lemma~\ref{lemma:period_det_string_sampling} 
applies and any matching location $j$ will allow 
to eliminate the next $\pi-1$ positions. Moreover, all following substrings 
of period $\pi$ starting at locations at least $j+\pi$ will have 
to be equal to some substring starting at location in $[j,j+\pi-1]$. 
This is easy to see. Let $j'\geq j+\pi$ be such a position and let $j''=((j'-j)\bmod \pi)+j$. 
Let $q'=T_B[j'..j'+B-1]$ and $q''=T_B[j''..j''+B-1]$. 
Since $j'\in[j..j+B-\pi+1]$, we can apply Lemma~\ref{lemma:period_merge}, 
and induce that $Q=T_B[j..j'+B-1]$ has also period $\pi$. Since $q''=T_B[j''..j''+B-1]$ is substring
of $Q$, we deduce that it has period $\pi$ as well. Also  
$q''[1..\pi]=T_B[j''..j''+\pi-1]=T_B[j'..j'+\pi-1]=q'[1..\pi]$, 
since $Q$ has period $\pi$. Thus $q'=q''$, since they both have period $\pi$
and their first $\pi$ characters are the same. 
Thus we can eliminate all positions except ones that are actually 
the same strings (they are all $\pi$ positions apart). 
\end{enumerate}
\paragraph{Runtime analysis}
It remains to show that the protocol can be implemented in both sides with running time $O(n\cdot\mathtt{polylog}(n))$.
We start with the sender. At each level, the sender needs to compute the the period 
of each piece which can be done in time $O(B)$ using Lemma~\ref{lemma:KMP}. If the period 
is longer than $4k+2$, then it needs to find a substring of the piece of length $12k+6$ with period more than $4k+2$. 
This can be done by computing the periods of all periodic substrings in time $O(B\log B)$ using Lemma~\ref{lemma:runs}. 
Then at least one of the strings 
should be non-periodic or should be periodic with period more than $4k+2$. If it was periodic we already have its period, 
otherwise, we compute its period using Lemma~\ref{lemma:KMP}. 
Then the deterministic sample for substring $p'$ associated 
with a piece is also computed in $O(B)$ time according 
to lemmas~\ref{lemma:det_string_sampling} and~\ref{lemma:period_det_string_sampling}. 
Summing up over all $n/B$ pieces we get 
running time $O((n/B)B\log B)\in O(n\log n)$ for computing the signatures of all pieces. 
Then computing the redundancy of the Reed-Solomon encoding of the concatenation 
of the pieces' encoding can be done in time $O((n/B)\cdot\mathtt{polylog}(n/B)\frac{\log^2k+\log n}{\log n})$
according to Lemma~\ref{lemma:ECC}.
Thus the total computation time at each level is
$O(n\cdot\mathtt{polylog}(n))$ and at all $\log (n/k)$ levels is also $O(n\cdot\mathtt{polylog}(n))$. 
This finishes the analysis of computation time at the sender's side. 
It remains to show that the time spent on the receiver's side can also be upper bounded 
by $O(n\cdot\mathtt{polylog}(n))$. The only non-trivial steps are:
\begin{enumerate}
\item The decoding of Reed-Solomon encoded strings, which can be done within the 
same time as encoding. 
\item The determination of whether a given substring 
from the receiver side has a certain period $\pi$ which 
can be determined in constant time after preprocessing 
of the whole string in $O(n)$ time (see Lemma~\ref{lemma:substring_period_query}). 
\end{enumerate}
All other steps are either trivial or identical to the ones 
on the sender's side. 
Thus we have that the running time on both sides is $O(n\cdot\mathtt{polylog}(n))$. 

We thus have proved the main theorem of this paper: 
\begin{theorem}
\label{main_theo}
There exists a one-way deterministic protocol 
for document exchange under the edit distance with  
running time $O(n\cdot\mathtt{polylog}(n))$
and message size $O(k^2+k\log^2n)$ bits. 
\end{theorem}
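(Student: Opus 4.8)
The plan is to instantiate the scheme described above --- the IMS hierarchy with Vishkin-style deterministic samples in place of hashes, truncated at block size $B_0:=\max(32k,2^{\lceil\log\log n\rceil})$ --- and then verify its three properties: correctness, message size $O(k^2+k\log^2 n)$, and running time $O(n\cdot\mathtt{polylog}(n))$. I would set it up by assuming (w.l.o.g., by padding) that $n=2^bk$ with $k$ a power of two, and by defining the levels to be the partitions of $T_A$ into $2k,4k,8k,\dots$ equal blocks down to the first level whose block length $B$ satisfies $B\le B_0$; since $B_0\ge 32k$ there are $O(\log(n/k))=O(\log n)$ levels.

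Correctness I would prove by induction on the level, with the invariant: after the Reed--Solomon correction step of Lemma~\ref{lemma:ECC}, the receiver holds the correct signature of every block at the current level (at the bottom level the ``signature'' is the literal block content, which therefore yields $T_A$). The inductive step rests on two facts. The easy one is the edit-distance property recalled in the footnote: for any partition of $T_A$, all but at most $k$ blocks occur verbatim in $T_B$ shifted by at most $k$ positions; the matching recovers those blocks correctly (by uniqueness of the match, below), while each of the at most $k$ exceptional blocks spawns at most two corrupted child signatures, so at most $2k$ corrupted signatures reach the next level --- exactly what the transmitted Reed--Solomon redundancy repairs. The substantive one --- and the step I expect to be the real obstacle --- is \emph{uniqueness of the match}: whenever a block $p=T_A[iB+1..iB+B]$ has a verbatim copy $T_B[j^*..j^*+B-1]$ inside its length-$(2k+1)$ search window, the matching procedure returns a substring equal to $p$ (for blocks with no such copy, a garbage answer is harmless, being one of the $\le k$ bad blocks).

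I would establish uniqueness of the match in the two cases hard-wired into the encoding. If the period $\pi$ of $p$ is at most $4k+2$ (the $\ell=B$ case): since the window has width $2k\ll B-\pi$ (as $B\ge 32k$), Lemma~\ref{lemma:period_merge} forces every length-$B$ window substring that passes the period test to sit inside a single $\pi$-periodic stretch; inside that stretch, $j^*$ being a verbatim copy makes every $j^*+c\pi$ in the window a verbatim copy too, and Lemma~\ref{lemma:period_det_string_sampling} (applicable since $3\pi\le B$) lets a matched position eliminate the next $\pi-1$ positions --- together this pins every surviving match to the residue class of $j^*$, so it carries the string $p$. If $\pi>4k+2$ (the $\ell<B$ case): $p'$ is non-periodic of length $\ge 8k+4$, so Lemma~\ref{lemma:det_string_sampling} yields forbidden intervals to the left and right of any sample-match whose widths sum to $|p'|/2\ge 4k+2$, hence one of them exceeds the window width $2k$; applying this once to $j^*$ and once to a hypothetical second matched window position and playing the two forbidden intervals against one another shows the second position cannot exist. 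I expect the bookkeeping here --- keeping straight which side of $j^*$ is protected and which side is erased, in both the periodic and the non-periodic case, and checking the inequalities against $B\ge 32k$ --- to be the fiddly heart of the proof.

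Finally I would add up the message size --- $O(k(\log n+\log^2 k))$ redundancy bits per non-bottom level over $O(\log(n/k))$ levels, plus $O(k(B_0+\log n))=O(k^2+k\log n)$ at the bottom --- and simplify using $\log(n/k)\log n\le\log^2 n$ together with the elementary estimate $k\log n\log^2 k=O(k^2+k\log^2 n)$ (split on whether $\log^2 k\le\log n$), obtaining $O(k^2+k\log^2 n)$. The running time I would bound level by level: Lemmas~\ref{lemma:KMP} and~\ref{lemma:runs} find the period of each block and the substring $p'$ in $O(B\log B)$ per block, Lemma~\ref{lemma:det_string_sampling} builds each sample in linear time, Lemma~\ref{lemma:ECC} performs the Reed--Solomon encoding/decoding, and on the receiver's side Lemma~\ref{lemma:substring_period_query} answers each ``is $\pi$ a period of $T_B[a..b]$?'' test in $O(1)$ after $O(n)$ preprocessing; every level costs $O(n\cdot\mathtt{polylog}(n))$ and there are $O(\log n)$ levels, which gives the claimed bound.
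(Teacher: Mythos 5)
Your proposal is correct and follows essentially the same route as the paper: the IMS hierarchy truncated at block size $\max(32k,2^{\lceil\log\log n\rceil})$, deterministic Vishkin samples in place of hashes, the two-case uniqueness-of-match argument (period $\le 4k+2$ via Lemma~\ref{lemma:period_merge} and Lemma~\ref{lemma:period_det_string_sampling}, period $>4k+2$ via Lemma~\ref{lemma:det_string_sampling} and the ``one of $t,t'$ exceeds $2k$'' observation), Reed--Solomon redundancy per level, and the same per-lemma runtime accounting. You are, if anything, slightly more explicit than the paper on the induction invariant and on the final arithmetic simplification ($k\log n\log^2 k=O(k^2+k\log^2 n)$ by splitting on $\log^2 k\lessgtr\log n$).
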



\section{A Scalable Error Correcting Code}

In~\cite{brakensiek2015efficient} it was shown that 
one can construct an efficient error correcting code 
for (adversarial) edit errors~\footnote{In this model, the receiver must 
be able to recover the original string regardless of the locations 
of errors, as long as their number is bounded by some parameter $k$~\cite{schulman1999asymptotically}.
} with near-linear encoding-decoding time and redundancy 
$O(k^2\log k\log n)$ bits. However, the analysis of both the time and 
redundancy assumes that $k$ is very small compared to $n$. 
In fact the scheme is not even defined for values of $k$ as 
small as $\sqrt{\log\log n}$. 


We can use our main result to construct an error correcting code 
with redundancy $r=O(k^3+k^2\log^2 n)$ bits 
and encoding-decoding time complexity $O(n\cdot\mathtt{polylog}(n))$. 
The scheme works for $k$ as large as $O(n^{1/3})$. 
Given the input string $s$ of length $n$, we first construct the message 
described in Theorem~\ref{main_theo}. The size of this message 
is $O(k^2+k\log^2 n)$ bits. 
Then we protect this message against $k$ edit operations 
by using a $(2k+1)$-repetition code
similarly to~\cite{brakensiek2015efficient}\footnote{They actually use a $(3k+1)$-repetition code, 
but as our analysis shows below, a $(2k+1)$-repetition code is sufficient in our case.}. 
The size of the protected 
message is then $m=r(2k+1)$ (each input bit is duplicated $2k+1$ times in the 
output). We finally send the original string followed by the protected 
message for a total size $n+m$ bits. 
Then, the receiver can consider 
the first $n$ bits as the original (potentially corrupted) message to be corrected, and then 
consider the remaining bits (between $m-k$ and $m+k$), as the protected message. 
The original message can then be recovered from the protected message as follows. 
Divide the protected message into $r$ blocks of length exactly $2k+1$ (except 
the last one which can have any length in $[1,3k+1$). Then, for every block, output 
the majority bit in that block. One can easily see that one edit error 
in the protected message will change the count of ones in the block it occurs in and following 
blocks by at most $\pm 1$. Thus, the decoding of the protected message will be robust 
against $k$ edit errors. 
Also, it is easy to see that $k$ edit operations on the $n+m$ sent bits 
will imply that the first $n$ bits will differ from the original string by at most 
$k$ operations and the last $m-k$ to $m+k$ bits will differ from the original 
protected message by at most $k$ edit operations. After decoding the protected 
message and recovering the original message, the receiver uses it in combination 
with the (potentially corrupted) received string to recover the original string. 
Finally, analyzing the running time, 
the only additional step we do compared to the document exchange protocol is 
the encoding and decoding of the message which takes time $O(m)\in O(n)$. 
This shows that the total encoding and decoding time for our proposed code is  
$O(n\cdot \mathtt{polylog}(n))$. We thus have shown the following theorem: 

\begin{theorem}
There exists an error correcting code 
for (adversarial) edit errors with
redundancy  $O(k^3+k^2\log^2n)$ bits and 
encoding-decoding time $O(n\cdot \mathtt{polylog}(n))$. 
The scheme works for any $k\in O(n^{1/3})$. 
\end{theorem}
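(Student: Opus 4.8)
The plan is to bootstrap the error-correcting code directly from the document exchange protocol of Theorem~\ref{main_theo}, using the classical observation that edit errors on a concatenation $s \cdot m'$ split into at most $k$ edit errors on each piece. First I would invoke Theorem~\ref{main_theo} on the input string $s$ of length $n$ to obtain a message $M$ of size $r = O(k^2 + k\log^2 n)$ bits such that, given any string $s'$ with edit distance at most $k$ from $s$, one can reconstruct $s$ from $s'$ and $M$ in time $O(n\cdot\mathtt{polylog}(n))$. The codeword will be $s$ followed by a $(2k+1)$-repetition encoding of $M$; so the redundancy is $(2k+1)r = O(k^3 + k^2\log^2 n)$ bits as claimed, and the constraint $k = O(n^{1/3})$ is exactly what keeps this redundancy $O(n)$ (so the footnote simplification $m \le n$ in Theorem~\ref{main_theo} and the claim $O(m) \subseteq O(n)$ both hold).

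The key steps, in order: (i) show the split property — if at most $k$ edit operations are applied to the $n+m$-bit codeword (where $m = (2k+1)r$), then the receiver can identify a prefix of length $n$ that is within edit distance $k$ of $s$ and a suffix of length between $m-k$ and $m+k$ that is within edit distance $k$ of the repetition-encoded $M$; this is immediate since each edit operation touches one position and hence affects the prefix/suffix partition by at most one operation on each side. (ii) Show that the $(2k+1)$-repetition code corrects $k$ edit errors: partition the received suffix into consecutive blocks of length $2k+1$ (the last block absorbing the slack, so having length in $[1,3k+1]$) and output the majority bit of each block; the point is that a single insertion or deletion shifts the block boundaries downstream by one, so it corrupts the majority count in its own block and in later blocks by at most $\pm 1$ cumulatively, hence $k$ edit errors flip the majority in no block, and each bit of $M$ is recovered. (iii) Feed the recovered $M$ together with the (corrupted) length-$n$ prefix into the decoder of Theorem~\ref{main_theo} to recover $s$. (iv) Bound the running time: the only work beyond the document-exchange decoder is the repetition encode/decode, which is $O(m) = O(n)$, so the total is $O(n\cdot\mathtt{polylog}(n))$.

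The main obstacle is step (ii): one must argue carefully that the majority-decoding of a repetition code is robust against edit (not just substitution) errors, because insertions and deletions desynchronize the block structure. The crucial quantitative claim is that after processing the first $j$ edit operations, every block boundary has moved by at most $j$ positions and every block's internal count of ones has changed by at most $j$ relative to its clean value; since $j \le k < (2k+1)/2$, the majority within each length-$(2k+1)$ block is preserved. I would make this precise by induction on the edit operations (or by a direct counting argument comparing, for each block index, the multiset of positions it now covers against the clean block), and then handle the final truncated block separately — it has length at least $1$ and at most $3k+1$, but since it is a block of the \emph{original} message all copies agree, so even after $\le k$ edits a strict majority survives there too. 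Everything else — the concatenation split, the time accounting, the redundancy arithmetic — is routine and follows the template of~\cite{brakensiek2015efficient}, with the only improvement being that $(2k+1)$-fold repetition suffices in place of $(3k+1)$-fold because we only need to defeat edit errors coming from a single adversarial budget of $k$.
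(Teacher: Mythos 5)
Your proposal is correct and follows the paper's proof essentially verbatim: invoke Theorem~\ref{main_theo} on $s$, append a $(2k+1)$-fold repetition encoding of the resulting message, have the receiver split the received word at position $n$, majority-decode the length-$(2k+1)$ blocks of the suffix to recover the message, and feed it with the corrupted prefix into the document-exchange decoder; the quantitative core (each edit shifts any block's count of ones by at most $\pm 1$, so $k$ edits cannot flip a majority in a clean block of $2k+1$ identical bits) is exactly the paper's one-sentence observation spelled out. The only addition is your more explicit handling of the truncated final block and your remark that the split of the $\le k$ edit budget across the two halves is what makes the argument go through, both of which the paper also asserts but does not elaborate.
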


\section*{Acknowledgements}
This work was initiated when the author was visiting MADALGO 
(Aarhus University) in Spring 2012. The author wishes to thank Hossein Jowhari 
and Qin Zhang for the initial fruitful discussions that motivated 
the problem. 

\bibliographystyle{plain}
\bibliography{document_exchange.bib}

\begin{thebibliography}{10}

\bibitem{bender2000lca}
Michael~A Bender and Martin Farach-Colton.
\newblock The lca problem revisited.
\newblock In {\em LATIN 2000: Theoretical Informatics}, pages 88--94. Springer,
  2000.

\bibitem{brakensiek2015efficient}
Joshua Brakensiek, Venkatesan Guruswami, and Samuel Zbarsky.
\newblock Efficient low-redundancy codes for correcting multiple deletions.
\newblock {\em arXiv preprint arXiv:1507.06175 and to appear at SODA
  conference}, 2016.

\bibitem{chakraborty2015low}
Diptarka Chakraborty, Elazar Goldenberg, and Michal Kouck{\`y}.
\newblock Low distortion embedding from edit to hamming distance using
  coupling.
\newblock In {\em Electronic Colloquium on Computational Complexity (ECCC)},
  volume~22, page~4, 2015.

\bibitem{fine1965uniqueness}
Nathan~J Fine and Herbert~S Wilf.
\newblock Uniqueness theorems for periodic functions.
\newblock {\em Proceedings of the American Mathematical Society},
  16(1):109--114, 1965.

\bibitem{gao2003new}
Shuhong Gao.
\newblock A new algorithm for decoding reed-solomon codes.
\newblock In {\em Communications, Information and Network Security}, pages
  55--68. Springer, 2003.

\bibitem{IMS05}
Utku Irmak, Svilen Mihaylov, and Torsten Suel.
\newblock Improved single-round protocols for remote file synchronization.
\newblock In {\em INFOCOM}, pages 1665--1676, 2005.

\bibitem{jowhari2012efficient}
Hossein Jowhari.
\newblock Efficient communication protocols for deciding edit distance.
\newblock In {\em Algorithms--ESA 2012}, pages 648--658. Springer, 2012.

\bibitem{karp1987efficient}
Richard~M Karp and Michael~O Rabin.
\newblock Efficient randomized pattern-matching algorithms.
\newblock {\em IBM Journal of Research and Development}, 31(2):249--260, 1987.

\bibitem{knuth1977fast}
Donald~E Knuth, James~H Morris, Jr, and Vaughan~R Pratt.
\newblock Fast pattern matching in strings.
\newblock {\em SIAM journal on computing}, 6(2):323--350, 1977.

\bibitem{kolpakov1999finding}
Roman Kolpakov and Gregory Kucherov.
\newblock Finding maximal repetitions in a word in linear time.
\newblock In {\em Foundations of Computer Science, 1999. 40th Annual Symposium
  on}, pages 596--604. IEEE, 1999.

\bibitem{levenshtein1966binary}
Vladimir~I Levenshtein.
\newblock Binary codes capable of correcting deletions, insertions, and
  reversals.
\newblock In {\em Soviet physics doklady}, volume~10, pages 707--710, 1966.

\bibitem{mccreight1976space}
Edward~M McCreight.
\newblock A space-economical suffix tree construction algorithm.
\newblock {\em Journal of the ACM (JACM)}, 23(2):262--272, 1976.

\bibitem{orlitsky1993interactive}
Alon Orlitsky.
\newblock Interactive communication of balanced distributions and of correlated
  files.
\newblock {\em SIAM Journal on Discrete Mathematics}, 6(4):548--564, 1993.

\bibitem{reed1960polynomial}
Irving~S Reed and Gustave Solomon.
\newblock Polynomial codes over certain finite fields.
\newblock {\em Journal of the society for industrial and applied mathematics},
  8(2):300--304, 1960.

\bibitem{schulman1999asymptotically}
Leonard~J Schulman and David Zuckerman.
\newblock Asymptotically good codes correcting insertions, deletions, and
  transpositions.
\newblock {\em IEEE transactions on information theory}, 45(7):2552--2557,
  1999.

\bibitem{VT65}
R.~R. Varshamov and G.~M. Tenengol'ts.
\newblock One asymmetric error correcting codes.
\newblock {\em Avtomatika i Telemechanika}, 26(2):288--292, 1965.

\bibitem{Vi90}
Uzi Vishkin.
\newblock Deterministic sampling-a new technique for fast pattern matching.
\newblock In {\em STOC}, pages 170--180, 1990.

\end{thebibliography}

\end{document}